\spnewtheorem{observation}{Observation}{\bfseries}{\itshape}
\begin{document}
\title{Side-Contact Representations with Convex Polygons in 3D: New Results for Complete Bipartite Graphs }
\titlerunning{Side-contact representations in 3D for complete bipartite graphs.}
%
\author{
Andr\'e Schulz\orcidID{0000-0002-2134-4852} }

\authorrunning{A. Schulz}

%
\institute{FernUniversität in Hagen, Universitätsstraße 47
58097 Hagen, Germany 
\email{andre.schulz@fernuni-hagen.de}}
\maketitle              
\begin{abstract}
A  polyhedral surface~$\mathcal{C}$ in $\mathbb{R}^3$ with convex polygons as faces 
is a \emph{side-contact representation} of a graph~$G$ if there is a bijection
between the vertices of $G$ and the faces of~$\mathcal{C}$ such that the polygons
of adjacent vertices are exactly the polygons sharing an entire  common side in~$\mathcal{C}$.

We show that $K_{3,8}$ has a side-contact representation but 
$K_{3,250}$ has not. The latter result implies that the number of edges of a 
graph with side-contact representation and $n$ vertices is bounded by $O(n^{5/3})$. 

\keywords{Contact Representations  \and Polyhedral Surfaces \and 3D.}
\end{abstract}

\section{Introduction}

Contact representations are a classical approach to visualize graphs.
A graph $G$ has a contact representation
if there is a bijection between its vertex set $V$ and a set 
of interior-disjoint geometric objects from
a given class such that two objects touch if and only if the corresponding
vertices are adjacent.  
For a concrete contact representation it has to be 
specified, which geometric objects are considered 
(including their embedding space)
and what it means for two objects to touch. In this paper 
we consider convex polygons in 3D as geometric objects. 
To avoid confusion we call the edges
of a polygon its \emph{sides} and the vertices its \emph{corners}. Two polygons touch with a \emph{side-contact} if and only if
they have a full side in common.
It is not allowed that a side is contained in more than two polygons.
 Notice that we do not require that all polygon sides
are incident to two polygons. For brevity, we call  
 representations of convex polygons in 3D
with such side-contacts simply \emph{side-contact representations} throughout the paper
and every polygon will be considered as convex.

It is an open question to characterize the graphs
that have a side-contact representation. First results were given by Arseneva
et al.~\cite{akklsvw}, who introduced this kind of contact representation.
We list some of the results from Arseneva
et al.: Exactly the planar graphs have a side-contact representation in the plane.
The graph $K_5$ has no side-contact representation in 3D, but $K_{3,5}$ and $K_{4,4}$ have one. Another
graph that has no side-contact representation is $K_{5,81}$, which implies by the K\H{o}vari--S\'os--Tur\'an
  theorem~\cite{KST54} that graphs with side-contact representation have at most $O(n^{9/5})$ edges, for
  $n$ being the number of vertices. On the other hand all graphs of hypercubes have a side-contact representation
  and thus there are $n$-vertex graphs with $\Theta(n\log n)$ edges with side-contact representation. 

There exists a large body of literature for other types 
of contact representations. For a few selected results in 2D we redirect 
the reader to Arseneva et al.~\cite{akklsvw}. For 3D we list some selected results here:
Due to Tietze~\cite{tietze1905} every graph has a contact representation with
 interior-disjoint convex polytopes in $\mathbb{R}^3$ where contacts are given by  shared 2-dimensional
facets. 
Evans et al. showed that every graph has a
contact representation in 3D where two convex polygons touch if they share a single corner~\cite{erssw-rghtp-GD19}.
Every
planar graph has a contact
representation with axis-parallel cubes in~$\mathbb{R}^3$
as shown by Felsner and Francis~\cite{FF11} (two cubes touch if their boundaries intersect), 
a similar result for boxes was discovered earlier by 
Thomassen~\cite{t86}.
Kleist and Rahman~\cite{kleist14} studied a similar model but required that 
the intersection has nonzero area. They proved
that every subgraph of an Archimedean
grid 
can be represented with unit cubes. Alam et al.~\cite{alam15} showed in the same model
with axis-aligned boxes that every 3-connected planar graph and its dual can be 
represented simultaneously. 

\paragraph{Our contribution.}  We extend the results of Arseneva
et al.~\cite{akklsvw} for complete bipartite graphs. We
 construct  a side-contact representation of $K_{3,8}$, where 
previously only a construction for $K_{3,5}$ was known. On the other
hand, we prove that $K_{3,250}$ has no side-contact representation. As a consequence
the number of edges of an $n$-vertex graph with a side-contact representation 
is bounded by $O(n^{5/3})$.

\section{A side-contact representation for $K_{3,8}$}
\label{sec:lb}

In this section we explain how to construct a side-contact representation of $K_{3,8}$.
As an intermediate step we
construct a \emph{corner-contact representation}. In contrast
to side-contact representations, two polygons touch with corner-contact if they share a single corner.
For a polygon $p$ its supporting plane $p^=$ defines two open half-spaces which we label
arbitrarily as $p^+$ and $p^-$.
We say that a representation (either corner-contact or side-contact) is \emph{one-sided}, if for every   
polygon $p$ its touching polygons lie either all in the closure of $p^+$, or they lie all in the 
closure of $p^-$. Note that the definition of one-sided is slightly stronger than the definition of 
\emph{one-sided with respect to a set} as used by by Arseneva
et al.~\cite{akklsvw}.

\begin{lemma}\label{lem:transform}
 Every one-sided corner-contact representation of $K_{3,8}$ can be transformed
 into a side-contact representation of $K_{3,8}$.
\end{lemma}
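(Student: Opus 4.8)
The plan is to take a one-sided corner-contact representation of $K_{3,8}$ and locally "thicken" each corner-contact into a genuine side-contact by attaching a tiny convex polygon (a little flap) at each corner where two polygons meet. The key structural fact I would exploit is the bipartition: let $A = \{a_1, a_2, a_3\}$ and $B = \{b_1, \dots, b_8\}$ be the two sides of $K_{3,8}$, represented by convex polygons. Every contact is between some polygon for $a_i$ and some polygon for $b_j$, so there are exactly $24$ corner-contacts. Since the representation is one-sided, for each polygon $p$ all of its $24/|\text{incidences}|$ contact points sit (weakly) on one chosen side $p^+$ of its supporting plane.

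**The construction**

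For each contact point $c$ where polygon $P_i$ (for $a_i$) meets polygon $Q_j$ (for $b_j$), I would introduce a new small polygon $R_{ij}$. The idea: near $c$, replace the single shared corner by a short shared segment. Concretely, truncate a tiny corner off $P_i$ near $c$ by a plane, creating a new short side $s$ of $P_i$; then let $R_{ij}$ be a thin convex polygon (e.g.\ a narrow triangle or quadrilateral) having $s$ as one of its sides, lying in the half-space $P_i^+$, and reaching over to touch $Q_j$ along another short side $s'$ obtained by symmetrically truncating $Q_j$. This turns the $a_i$–$b_j$ adjacency into the two side-contacts $P_i$–$R_{ij}$ and $R_{ij}$–$Q_j$; the polygon $R_{ij}$ represents the subdivision vertex. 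But wait — we need a representation of $K_{3,8}$ itself, not its subdivision. So instead $R_{ij}$ must be merged appropriately: the cleaner route is to attach, at each contact, a flap to $P_i$ and verify it can be made to share a full side with $Q_j$ directly. Because the contact is an isolated corner point and (by one-sidedness plus general position, which we may assume after a perturbation) no other polygon comes near $c$, there is an $\varepsilon$-ball around $c$ that meets only $P_i$ and $Q_j$; inside this ball we have complete freedom to reshape the two polygons' corners into matching short parallel sides lying in a common plane, giving a side-contact while destroying no existing adjacency and creating no new one.

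**Making it rigorous**

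The steps in order: (1) fix the one-sided corner-contact representation and choose, for each polygon, the half-space containing all its contacts; (2) perturb so that all $24$ contact points are distinct and each lies in the relative interior of a side (not at a corner) of at most one of the two polygons — actually we want it at a corner of neither, so first push each contact point slightly into a side, which is possible since each polygon is convex and has spare "boundary length"; (3) choose $\varepsilon$ small enough that the $\varepsilon$-balls about the contact points are pairwise disjoint and each meets only its two polygons and avoids all corners; (4) inside each ball, carve a small notch into $P_i$ at the contact and a matching notch into $Q_j$, and observe these can be arranged coplanar and sharing a full segment — here one-sidedness is what guarantees the notch can be cut "inward" consistently without forcing a polygon to become non-convex or to poke through its own contacts on the far side; (5) check that convexity of every modified polygon is preserved (cutting a small corner off a convex polygon with a plane keeps it convex) and that the global incidence pattern is exactly $K_{3,8}$.

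**The main obstacle**

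The delicate point — and where one-sidedness is genuinely used — is step (4): ensuring that when we simultaneously modify a polygon $P_i$ at all eight of its contact corners, the eight local modifications are mutually compatible and jointly keep $P_i$ convex and planar-faced. Since all eight contacts lie on the closed side $P_i^+$, we can push $P_i$ slightly and add the new coplanar material on the $P_i^+$ side without the flaps from different contacts interfering; if contacts could lie on both sides of $P_i$, a flap built for a contact on $P_i^+$ could collide with one on $P_i^-$, or we would be forced to bend $P_i$. I expect verifying this simultaneous compatibility — essentially an $\varepsilon$-locality argument plus a convexity-preservation check, together with confirming the final graph has no spurious edges — to be the technical heart of the proof, while the existence of the one-sided corner-contact representation of $K_{3,8}$ is established separately.
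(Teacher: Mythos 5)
Your step (4) — the local move that is supposed to turn a corner-contact into a side-contact — does not work as described, and the missing idea is precisely the geometric constraint that governs it. If $P_i$ and $Q_j$ are to share a full side while remaining planar convex polygons, that side must lie on the line $\ell=P_i^=\cap Q_j^=$; you do not have ``complete freedom'' inside an $\varepsilon$-ball to choose a common plane or a common segment, since both supporting planes are already fixed by the rest of the representation. Moreover, one-sidedness of $Q_j$ forces $P_i\subset Q_j^=\cup Q_j^+$, hence within $P_i^=$ the polygon $P_i$ lies in a closed half-plane bounded by $\ell$; i.e.\ $\ell$ is a \emph{supporting} line of $P_i$ at the contact point $c$ (and symmetrically of $Q_j$). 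Consequently, ``carving a small notch'' or ``truncating a tiny corner'' is exactly the wrong operation: a cut near $c$ either removes the contact point altogether (so the polygons no longer touch) or, if made along $\ell$, removes nothing; the notch edges lie in $P_i^=$ and $Q_j^=$ respectively and can only coincide if they lie on $\ell$, which a truncation cannot produce. The viable local move is to \emph{add} material, e.g.\ replace $P_i$ by $\mathrm{conv}(P_i\cup\{c'\})$ for a point $c'\in\ell$ near $c$, and likewise for $Q_j$ with the same $c'$ — and the justification that this yields $cc'$ as a genuine side of both hulls is exactly the supporting-line property above, which your write-up never identifies. Your preliminary step (2) is also inconsistent with the hypothesis: pushing a contact point into the relative interior of a side of $Q_j$ that is not contained in $P_i^=$ makes $Q_j$ cross the plane $P_i^=$, destroying one-sidedness, so this ``general position'' cleanup cannot be performed as stated.

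Even after replacing notching by hull-extension, the remaining work you defer (simultaneous modifications at all contacts of one polygon, possibly two blue polygons meeting a red polygon at the same corner, and ruling out new incidences when slivers are added) is the real content and is not supplied. For comparison, the paper avoids the per-contact surgery entirely: for each blue polygon $b$ it uses one-sidedness to find a triangular cell of the line arrangement $\{b^=\cap r_i^=\}$ with the corners of $b$ on its edges, takes a parallel plane $h$ slightly offset into $b^+$, replaces $b$ by the convex hull of the three segments $h\cap r_i$ (which all appear as full sides of that hull), and trims the red polygons at $h$; this handles all three contacts of $b$ at once and makes the ``no incidences gained or lost'' check straightforward. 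Your approach could in principle be repaired along the supporting-line observation, but as written the key mechanism is absent and the decisive step is geometrically incorrect.
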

\begin{proof}
We call the polygons from the partition class with eight elements the blue polygons.
Every blue polygon $b$ can be trimmed to a triangle touching the (red) polygons
$r_1,r_2,r_3$. We can assume for every $b$ that the red polygons lie in $b^+\cup b^=$.
Consider the plane $b^=$ and the line arrangement $\mathcal{A}$ 
given by $b^=\cap r^=_i$ for $i\in\{1,2,3\}$. Since the representation is one-sided,
$\mathcal{A}$ contains a triangular cell $\Delta$ such that every edge of $\Delta$ contains exactly one
corner of $b$. Let $h$ be a plane parallel to $b^=$  (inside $b^+$, very close to $b^=$) such that the
line arrangement 
given by $h\cap r^=_i$ for $i\in\{1,2,3\}$ is combinatorially equivalent to $\mathcal{A}$
and furthermore the cell corresponding to $\Delta$ contains on every edge exactly one 
line segment from $S=\{s_i:=h\cap r_i\mid i\in \{1,2,3\}\}$. We can now replace $b$ by the convex hull of $S$ and then restrict the red polygons 
to $b^=\cup b^+$, w.r.t. the modified $b$. 
Since the three segments of $S$ lie on the boundary of $\Delta$, they all appear on 
its convex hull.
Thus we keep all incidences
without introducing new ones (see~\autoref{fig:transform}). Also, the one-sidedness property is maintained.
Repeating this for every blue polygon
yields a side-contact representation of $K_{3,8}$. 
\end{proof}
\begin{figure}
	\centering
	\includegraphics[scale=.77,page=2]{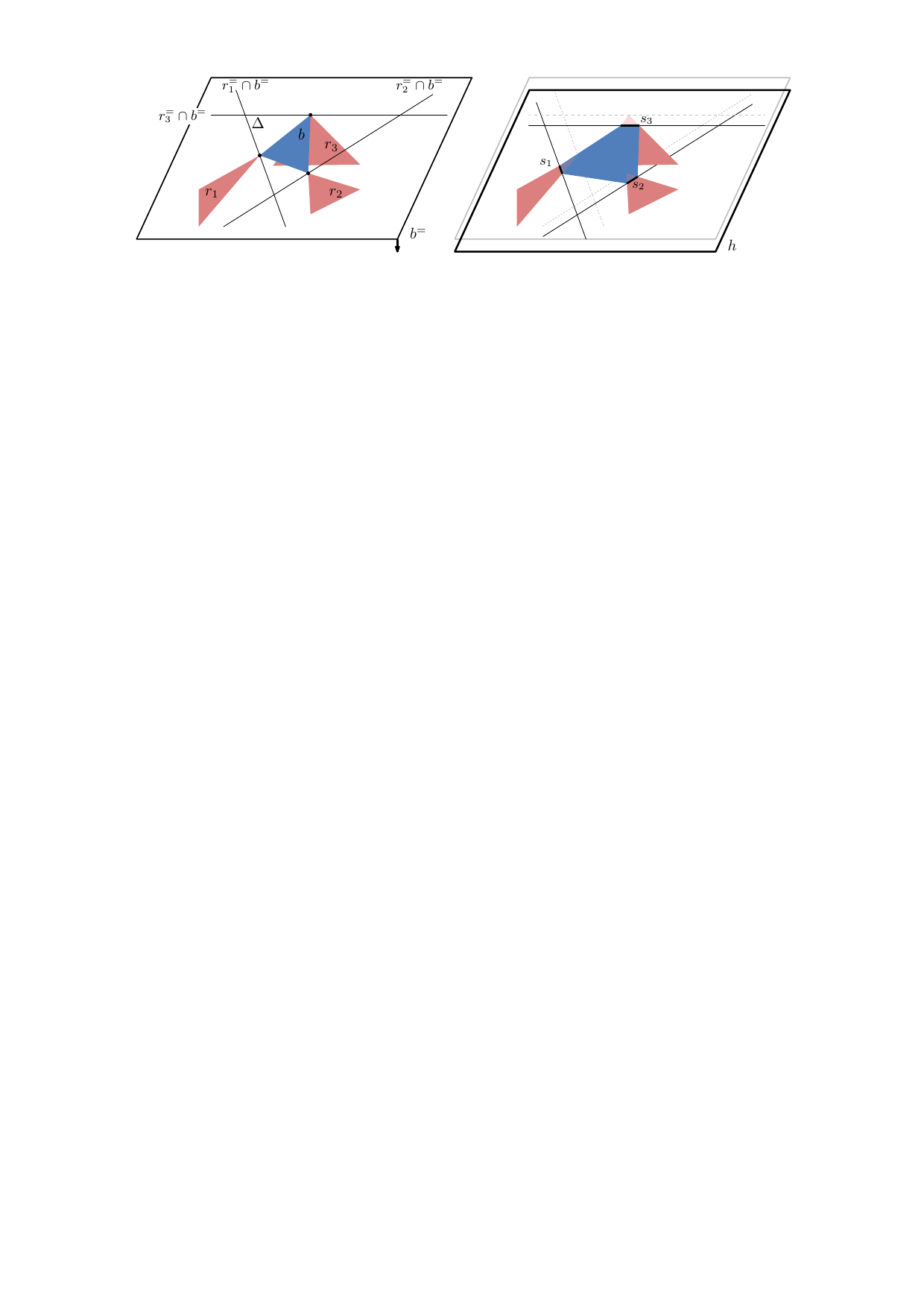}
	\caption{Offsetting the supporting planes of the blue polygons can transform a one-sided corner-contact representation
	into a side-contact representation.}
	\label{fig:transform}
\end{figure}

It remains to construct a one-sided corner-contact representation for $K_{3,8}$.
We start with a hexagonal prism of height $1$. 
Its base is parallel to the xy-plane and given by a hexagon with alternating side lengths
2 and 14 and interior angles of $2\pi/3$. We name the corners of the bottom
base (in cyclic order) $x_0,\ldots,x_5$, and the corners at the top
$x'_0,\ldots,x'_5$, such that $x_i$ and $x'_i$ are adjacent.
All indices of these points are considered modulo 6.
Let $\ell_i$ be the segment
between $x_{2i+1}$ and $x'_{2i+4}$ for $i\in\{1,2,3\}$. For any $\ell_i$ we 
define $\ell'_i$ to be a copy of $\ell_i$ that is vertically shifted up by $0.2$.
Now, we subdivide all six segments in the middle and move the subdivision point vertically up by $1.08$
in case of the $\ell'_i$s and vertically down by $1.08$ in case of the $\ell_i$s.
We define for all $i\in\{1,2,3\}$ the (red) polygon $r_i$ as 
the convex hull of $\ell_i$ and $\ell'_i$ (including the translated subdivision point). 
Notice that the polygons are disjoint (see appendix).
The convex hull of these polygons defines a convex polyhedron $\mathcal{P}$.
We observe that  $\mathcal{P}$ has eight triangular faces that are incident to all
red polygons. These define the blue polygons (see~\autoref{abb:k38}). Note that at the subdivision points we have 
one red polygon adjacent to two blue polygons. To resolve this issue we replace all subdivision
points by an $\varepsilon$-small side such that the red polygons remain convex
and all blue polygons still appear on the convex hull of $\{r_1,r_2,r_3\}$ (details are given in the appendix).
We then move the corners of the adjacent blue polygons to two distinct endpoints of the new
sides. It can be checked (see also appendix) that the constructed representation is one-sided (in particular, the 
 red polygons are contained in $\mathcal{P}$) and therefore,
 by~\autoref{lem:transform} it can be transformed into a (one-sided) side-contact representation.
We summarize our result.

\begin{theorem}\label{thm:lower}
	The graph $K_{3,8}$ has a side-contact representation 
with convex polygons in 3D.
\end{theorem}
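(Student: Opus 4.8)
The plan is to invoke Lemma~\ref{lem:transform}, which reduces the theorem to the construction of a \emph{one-sided corner-contact} representation of $K_{3,8}$. Once such a representation is available, the lemma upgrades it to a side-contact representation, so the whole burden of the proof is the corner-contact construction together with a verification that it is one-sided.

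For the construction I would build three congruent ``red'' polygons $r_1,r_2,r_3$ whose convex hull $\mathcal{P}$ has exactly eight triangular faces, each simultaneously incident to all three red polygons; those eight triangles then serve as the ``blue'' polygons, and the only corner-contacts present are the required eight-times-three pairs $\{r_i,b_j\}$. To force this facial structure I would use a hexagonal prism with alternating side lengths $2$ and $14$ as a scaffold, and take $r_i$ to be the convex hull of two parallel slanted segments $\ell_i$ and $\ell_i'=\ell_i+0.2\,\hat z$ that run diagonally through the prism, after bowing $\ell_i$ downward and $\ell_i'$ upward by shifting their midpoints vertically by $1.08$. Placed with threefold rotational symmetry about the prism axis, the three red ``blades'' interleave, so that the upper part of $\mathrm{conv}\{r_1,r_2,r_3\}$ is spanned by one ring of triangular facets and the lower part by another; the side lengths and the offsets $0.2$ and $1.08$ are tuned precisely so that the total count is exactly eight triangles, each of which uses one corner from each blade.

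With the metric data fixed, I would then verify three facts, each a direct (if tedious) computation with explicit coordinates: (i)~the three red polygons are pairwise disjoint; (ii)~each of the eight triangular faces of $\mathcal{P}$ is incident to exactly one corner of each $r_i$, so the face incidences realise $K_{3,8}$; and (iii)~the representation is one-sided, i.e.\ for every polygon $p$ in the construction all polygons touching $p$ lie in one closed halfspace bounded by $p^=$ — in particular each $r_i$ lies inside the closed region bounded by $\mathcal{P}$. Fact~(iii) is exactly what licenses the later application of Lemma~\ref{lem:transform}.

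The remaining, and I expect most delicate, point is the degeneracy at the six translated midpoints: there a red polygon meets two blue triangles at the very same point, which is a forbidden double corner-contact. I would resolve this exactly as sketched in the text — replace each such corner of a red polygon by an $\varepsilon$-short side, chosen so that the red polygon stays convex and all eight blue triangles still appear as faces of $\mathrm{conv}\{r_1,r_2,r_3\}$, and then move the two blue corners that previously coincided to the two distinct endpoints of the new short side. This preserves every required corner-contact, introduces no new contact, and keeps one-sidedness. The hard part here is arguing that these six local perturbations can be carried out simultaneously without the tiny edges interfering with one another or with the facial structure of $\mathcal{P}$; this is essentially a general-position/continuity argument showing that the combinatorics of the convex hull is stable for all sufficiently small $\varepsilon$. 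Once this is established, Lemma~\ref{lem:transform} converts the resulting one-sided corner-contact representation into a side-contact representation of $K_{3,8}$, which proves the theorem.
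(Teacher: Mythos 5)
Your proposal follows essentially the same route as the paper: reduce to a one-sided corner-contact representation via Lemma~\ref{lem:transform}, build the three red ``blades'' from the hexagonal prism with alternating side lengths $2$ and $14$ and vertical offsets $0.2$ and $1.08$, take the eight triangular hull facets as blue polygons, resolve the midpoint degeneracies by $\varepsilon$-short sides, and verify disjointness and one-sidedness by explicit coordinate computations. This matches the paper's proof (including its appendix verification), so there is nothing substantive to add.
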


\begin{figure}
\hspace*{-8mm}
	\hspace*{.7cm}\includegraphics[scale=.65,page=2]{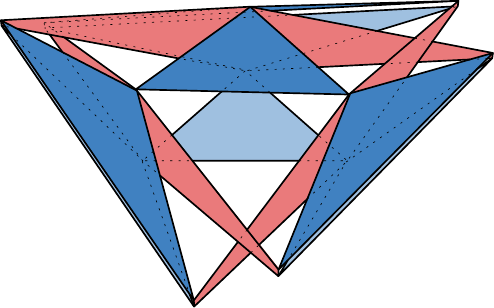}
	\hspace*{.7cm}\includegraphics[scale=.65]{k38-2.pdf}
	
	\caption{The configuration of the red polygons and the prism (left). The full configuration (right).}
	\label{abb:k38}
\end{figure}

We remark that the side-contact representation of $K_{3,8}$ is one-sided. As a consequence
of a result by Arseneva
et al.~\cite[Lemma~11]{akklsvw} no $K_{3,t}$ with $t>8$ has a one-sided representation with side-contacts.
	
\section{$K_{3,250}$ has no side-contact representation}\label{sec:ub}

In this section we prove the following result:
\begin{theorem}\label{thm:upper}
 The graph $K_{3,250}$ has no side-contact representation 
with convex polygons in 3D.
\end{theorem}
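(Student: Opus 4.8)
The plan is to proceed by contradiction. Assume a side-contact representation of $K_{3,250}$ exists; call the three polygons of the small side \emph{red} ($r_1,r_2,r_3$, lying in planes $H_1,H_2,H_3$) and the others \emph{blue} ($b_1,\dots,b_{250}$), so that every $b_k$ shares a full side with each $r_i$, no two reds share a side, and no two blues share a side. The first step is an \emph{orientation reduction}: the side shared by $b_k$ and $r_i$ is a full side of $b_k$ contained in $H_i$, so $b_k$ lies on one closed side of the line $b_k^{=}\cap H_i$ within its own plane, hence on one closed side of $H_i$; therefore $b_k$ lies in the closure of one of the at most $8$ cells of the plane arrangement $\{H_1,H_2,H_3\}$. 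Generically these planes meet at a point and each cell is a simplicial cone — a \emph{corner}; the degenerate configurations of the three planes (some of them parallel, sharing a common line, or two coinciding) still produce at most $8$ cells and are handled separately by the same, simpler, scheme. By pigeonhole some corner $C$ contains a set $\mathcal B$ of at least $\lceil 250/8\rceil=32$ blue polygons; fix $C$ and $\mathcal B$.

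The core of the argument is a \emph{bounded-population lemma}: inside a single corner $C$, at most a fixed (small) number of pairwise non-touching convex polygons can each touch all three walls of $C$ along a full side, \emph{given} that the three touched red polygons are bounded and pairwise non-touching. I would first isolate the flat prototype — inside a triangle $\Delta$, two interior-disjoint convex sets cannot both meet all three sides, since a line weakly separating them must cut off a corner of $\Delta$ whose side-region misses the opposite side — and then transport it into $C$ by intersecting everything with a small sphere $\Sigma$ centred at the apex $O$. The walls of $C$ carve out a spherical triangle $\Theta$, and $\Theta$ lies in the open cap $\{\langle n_1+n_2+n_3,\cdot-O\rangle>0\}$ (with $n_i$ the inward wall-normals), so spherical convexity inside $\Theta$ behaves exactly like planar convexity; each $b_k\in\mathcal B$, projected radially from $O$, becomes a spherically convex region of $\Theta$ meeting all three sides. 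Were these radial shadows pairwise interior-disjoint, the flat prototype would already force $|\mathcal B|\le 1$.

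The main obstacle is precisely that radial shadows of interior-disjoint convex bodies in a cone need \emph{not} be disjoint: two blue polygons can be stacked one behind the other as seen from $O$. Eliminating this is where the hypotheses on the red polygons are spent. Since each $r_i$ is a single \emph{bounded} convex polygon, its contact segments with the blue polygons cannot be packed arbitrarily toward the cone edges $H_i\cap H_j$; and forcing too many pairwise non-touching blue polygons of $C$ to crowd their contacts toward $H_i\cap H_j$ on two of the three reds simultaneously would make those two reds share a side, contradicting red--red non-adjacency, while the only alternative — a blue polygon pressed onto another blue polygon — is also forbidden. Converting this dichotomy into an explicit per-corner bound is the delicate, quantitative heart of the proof, and it is where the concrete number $250$ comes from: the counting is set up so that the number of corners times the per-corner bound stays below $250$, which makes $|\mathcal B|\ge 32$ impossible. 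Combining this with the orientation reduction and the pigeonhole step gives the contradiction, and the finitely many degenerate plane configurations are verified separately in the same way.
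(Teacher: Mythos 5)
Your reduction steps are fine (each blue polygon lies in one closed cell of the arrangement of the three red supporting planes, so pigeonhole concentrates many blues in one corner $C$), but the proof then rests entirely on a ``bounded-population lemma'' that you never actually establish. You candidly note that the natural route --- radially projecting the blue polygons from the apex onto a sphere and invoking the flat prototype (two interior-disjoint convex sets in a triangle cannot both meet all three sides) --- breaks down because radial shadows of interior-disjoint polygons in a cone can overlap, and you then wave at a ``dichotomy'': crowding too many contact sides toward an edge $H_i\cap H_j$ would force two red polygons to share a side, or two blues to touch. This is not an argument. Red polygons can have contact sides accumulating arbitrarily close to the cone edges without ever sharing an \emph{entire} common side (which is the only thing non-adjacency forbids), and no quantitative mechanism is given that turns ``many blues in $C$'' into either forbidden contact. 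Since this lemma is, as you say yourself, the heart of the proof and the source of the constant $250$, the proposal has a genuine gap exactly where the difficulty lies.

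For comparison, the paper's mechanism at this point is quite different and is worth knowing: instead of bounding how many blue polygons fit in a corner by a packing/projection argument, it works inside the three faces $f_1,f_2,f_3$ of the chosen octant with the supporting lines of the contact sides. Two planar lemmas about segments in convex position (every such family has at most one ``flopped'' consecutive support intersection point; any two half-planes containing the family and not containing a whole segment in their boundaries leave some csi-point in the interior of their intersection; and a flopped csi-point confines all si-points to a triangle) let one iteratively select, on $r_1$, then $r_2$, then $r_3$, nested subfamilies of contact sides whose supporting-line intersection points are forced into the \emph{interiors} of $f_1$, $f_2$, $f_3$ respectively. The endgame is that two specific blue polygons end up with supporting planes meeting in three distinct, non-collinear points (one in each face interior), hence the two planes coincide; this collapses the whole representation into a single plane, contradicting non-planarity of $K_{3,250}$. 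The constant $250$ arises from the bookkeeping $250\to 5$ usable octants $\to 50\to 44\to 12\to 4\to 2$. Your sketch contains no counterpart to this step, so as it stands it does not prove the theorem.
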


To prove \autoref{thm:upper} we present first some lemmas for
configurations of (straight-line) segments in 2D. 
Thus, until mentioned otherwise, all configurations
are considered in 2D from now on.
We say that a set of segments $\mathcal{S}$ is convex if every $s \in \mathcal{S}$ lies on the convex hull
of $\mathcal{S}$ and no two segments have the same slope.
We allow that in a convex set of segments two segments share an endpoint. Assume that the 
elements of
$\mathcal{S}$ are named such that the sequence $s_1,s_2\ldots,s_m$ lists the segments according to
their clockwise appearance on the convex hull. 
The intersection of the supporting lines of two segments $s_i$ and $s_j$
is called \emph{support intersection
point} (si-point for shorthand notation) of $s_i$ and $s_j$.
If $j=i+1$, or $i=1$ and $j=m$, we call the si-point of $s_i$ and $s_j$ a 
\emph{consecutive support intersection
point} (csi-point for shorthand notation).
Let $s_i=a_ib_i$ and $s_j=a_jb_j$ such that $a_ib_j$ is a proper segment
 on the convex hull of $\mathcal{S}$. The csi-point $c_{ij}$ of $s_i$ and $s_j$ is called
\emph{flopped} if the line through $a_ib_j$ defines a 
closed half-space that contains $c_{ij}$ and $\mathcal{S}$. See~\autoref{fig:crossingpoints} for an 
illustration.

\begin{figure}
	\centering
	\includegraphics[scale=.7]{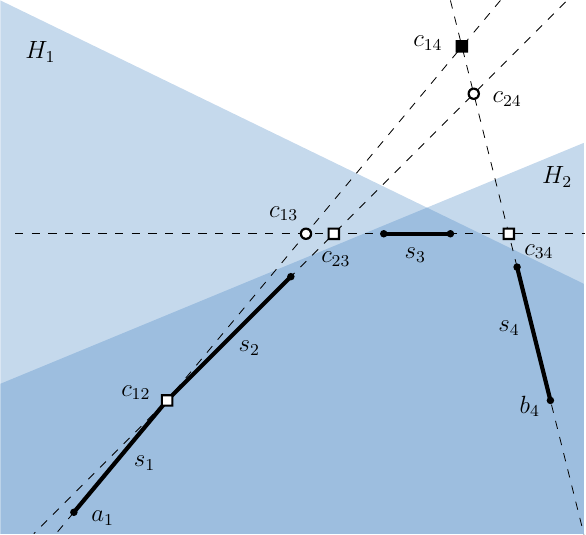}
	\caption{Four segments in convex position. Csi-points are shown as squares. The only flopped
	csi-point $c_{14}$ is filled. All other si-points are drawn as (empty) disks. 
	}
	\label{fig:crossingpoints}
\end{figure}

\begin{lemma}\label{lem:ub1}
	For any set $\mathcal{S}$ of segments in convex positions there is
	at most one csi-point that is flopped. \end{lemma}

\begin{proof}
	Let $s_i$ and $s_j$ be two segments with csi-point $c_{ij}$.
	We can assume that $j=i+1$.
	 The point $c_{ij}$ can only be flopped if the clockwise radial sweep of the 
	 tangent lines from $s_i$ to $s_j$ requires an angle larger than $\pi$, since 
	we need to transition a state in which the tangent line
	is parallel to $s_i$. Since in a total angular sweep we rotate by $2\pi$, this
	can happen only once. 
\end{proof}

\begin{lemma}\label{lem:ub2}
	Let $\mathcal{S}$ be a set of at least four segments in convex position.
	Consider any two closed half-spaces $H_1$ and $H_2$ that (i) both contain  $\mathcal{S}$, and (ii) no $s\in \mathcal{S}$ is completely part of the boundary of $H_1$ or $H_2$.
	 Then at least one csi-point of $S$
	lies in the interior of $H_1 \cap H_2$. 
\end{lemma}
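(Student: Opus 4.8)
The plan is to build the \emph{kernel} of $\mathcal S$, read off its combinatorial structure, and then argue that each of $H_1$ and $H_2$ can force at most one vertex of the kernel out of its interior; since there are $m\ge 4$ csi-points, one of them survives. Write $s_1,\dots,s_m$ for the segments of $\mathcal S$ in clockwise order, $\ell_i$ for the supporting line of $s_i$, and $H_i$ for the closed half-plane bounded by $\ell_i$ that contains $\mathcal S$, and abbreviate $c_i:=c_{i,i+1}$ (indices mod $m$). Set $K:=\bigcap_{i=1}^{m}H_i$. Then $\mathrm{conv}(\mathcal S)\subseteq K$; since $s_i\subseteq\ell_i\cap\mathrm{conv}(\mathcal S)\subseteq\ell_i\cap K$ is a nondegenerate segment and $\ell_i$ supports $K$, the line $\ell_i$ carries a whole edge $e_i\supseteq s_i$ of $K$. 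As every edge of $K$ lies on some $\ell_i$ and the $\ell_i$ are pairwise distinct (distinct slopes), $K$ has precisely the $m$ edges $e_1,\dots,e_m$, and these occur along $\partial K$ in the clockwise order of their outward normals, that is, in the order $e_1,\dots,e_m$. This order closes up --- so that $K$ is a bounded convex polygon with vertex set $\{c_1,\dots,c_m\}$ --- unless two consecutive normals are at angular distance at least $\pi$; since the $m$ gaps are positive and sum to $2\pi$ while $m\ge 4$, there is at most one such gap, say between the normals of $s_k$ and $s_{k+1}$, and then $K$ is an unbounded convex polygon whose two unbounded edges are $e_k$ and $e_{k+1}$, with $c_k$ (the flopped csi-point of \autoref{lem:ub1}) not a vertex of $K$ while every other $c_j$ is. In either case, two consecutive vertices of $K$, say $c_j$ and $c_{j+1}$, are joined by the edge $e_{j+1}$, which contains $s_{j+1}$.

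The crux is the following: if $H$ is a closed half-plane with $\mathrm{conv}(\mathcal S)\subseteq H$ and no $s_i$ contained in $\partial H$, then all but at most one vertex of $K$ lie in $\mathrm{int}(H)$. Let $\ell:=\partial H$ and $H':=\overline{\mathbb R^2\setminus H}$. No edge $e_i$ of $K$ lies on $\ell$ (it contains $s_i$), so $\ell$ meets $\partial K$ in at most two points, and the vertices of $K$ lying in $H'$ are consecutive along $\partial K$ (they all lie on a single arc of $\partial K$, namely the non-$\ell$ part of the boundary of the convex region $K\cap H'$). If two such vertices existed, then some two of them, consecutive along $\partial K$, say $c_j$ and $c_{j+1}$, would be joined by the edge $e_{j+1}$; convexity of $H'$ gives $e_{j+1}\subseteq H'$, hence $s_{j+1}\subseteq H'$, so $s_{j+1}\subseteq H\cap H'=\ell$, contradicting the hypothesis on $H$. (If $K\cap H'$ is lower-dimensional it is a single vertex of $K$, and there is nothing to show.)

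To conclude I would count, using $\mathrm{int}(H_1\cap H_2)=\mathrm{int}(H_1)\cap\mathrm{int}(H_2)$: a csi-point fails to lie in $\mathrm{int}(H_1\cap H_2)$ only if it lies outside $\mathrm{int}(H_1)$ or outside $\mathrm{int}(H_2)$. By the crux, among the csi-points that are vertices of $K$ at most one lies outside $\mathrm{int}(H_1)$ and at most one lies outside $\mathrm{int}(H_2)$; together with the at most one csi-point that is not a vertex of $K$, at most $3<4\le m$ csi-points are missed, so at least one lies in $\mathrm{int}(H_1\cap H_2)$.

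The step I expect to be most delicate is the structural description of $K$, in particular the unbounded case (identifying the two unbounded edges and confirming that the remaining $c_j$ are genuine vertices joined pairwise by the edges $e_{j+1}$ with $s_{j+1}\subseteq e_{j+1}$), together with the claim that the boundary of an admissible half-plane meets $\partial K$ in at most two points and hence isolates at most one vertex of $K$ --- the subcase where $\partial H$ runs through a vertex of $K$ needs a short separate argument based on $\mathrm{conv}(\mathcal S)$ touching every edge of $K$. Once these are in place, the exclusion claim and the final count are short.
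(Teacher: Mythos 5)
Your overall plan is the same counting argument as the paper's: the csi-points are the vertices of the polygon $K=\bigcap_i H_i$ (the paper phrases this via the convex hull of the csi-points), each admissible half-plane can exclude at most one of these vertices from its interior, and together with the at most one flopped csi-point this misses at most $3<4\le m$ points. The only substantive difference is how the flopped case is treated: the paper adds an auxiliary segment that destroys the flopped csi-point (equivalently, makes $K$ bounded) and then argues only in the bounded situation, whereas you work with the unbounded $K$ directly --- and that is exactly where your argument has a gap.

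The gap is in the justification of your crux claim. The inference ``no edge of $K$ lies on $\ell$, so $\ell$ meets $\partial K$ in at most two points, hence the vertices of $K$ in $H'$ lie on a single arc and are consecutive along $\partial K$'' is only valid when $K$ (equivalently $K\cap H'$) is bounded. When $K$ is unbounded, $K\cap H'$ may be unbounded and the non-$\ell$ part of its boundary may consist of \emph{two} disjoint unbounded arcs, one at each end of the chain $\partial K$, so the excluded vertices need not be consecutive and no edge of $K$ need lie in $H'$. Concretely, let $K$ have vertices $(-1,1)$, $(0,0)$, $(1,1)$, bounded edges joining them, and unbounded edges the rays from $(-1,1)$ in direction $(-1,2)$ and from $(1,1)$ in direction $(1,2)$; take $H=\{y\le 1/2\}$. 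Then $\ell$ contains no edge of $K$ and meets $\partial K$ in exactly two points, yet the two non-adjacent vertices $(\pm1,1)$ lie in $H'$ while $(0,0)\in\mathrm{int}(H)$. Of course this does not contradict your crux statement, because here the rays carry no admissible segment inside $H$ --- but that is precisely the point: ruling out the ``one excluded vertex at each end'' scenario requires using that the two \emph{unbounded} edges of $K$ also contain segments of $\mathcal S$ lying in $H$, a fact your argument never invokes at this step. The omission matters, since your final count needs ``at most one per half-plane'' and is applied with $m=4$; if a half-plane could exclude two vertices, the bound $3<4$ is lost. The claim itself is true and repairable: for instance, note that the edge directions along the chain $\partial K$ span an angle less than $\pi$, so the linear functional defining $H$ changes the sign of its slope at most once along the chain, and combine this with the fact that every edge --- including both rays --- contains a point of $\mathrm{int}(H)$; alternatively, follow the paper and add one auxiliary segment closing the $>\pi$ normal gap, which bounds $K$, preserves all csi-points except the flopped one, and reduces everything to your bounded-case argument (which, apart from the vertex-on-$\ell$ subcase you already flagged, is fine).
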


\begin{proof}
	Assume first that $\mathcal{S}$ contains no flopped csi-point. Then
	the set of csi-points forms a convex set $C$. Furthermore, every edge of the
	convex hull of $C$ contains exactly one segment of $\mathcal{S}$ completely.
	Consider now a closed half-space $H$ that contains  $\mathcal{S}$. 
	If the interior of $H$
	misses two points from $C$, then an edge of the convex hull of $C$ (and therefore a segment of $\mathcal{S}$) lies in the complement of  
	the interior of $H$. Clearly $H$ violates condition (i) or (ii) from the statement of the lemma in this case.
		
	Now assume that we have a flopped csi-point $c$. We can augment $\mathcal{S}$ by adding a new
	segment such that the new set is convex and has no flopped csi-point. All 
	csi-points other than $c$ will remain. Thus, also in this situation, at most
	one nonflopped
	csi-point is not in the interior of $H$ 
	if the boundary of $H$ contains no segment from~$\mathcal{S}$ completely.
	
	The interior of the intersection of any two closed half-spaces $H_1$ and $H_2$ fulfilling~(i) and~(ii)
	can therefore miss no more than one nonflopped csi-point per half-space, and possibly 
	a flopped csi-point if it exists. By Lemma~\ref{lem:ub1} there can only be one flopped csi-point. 
	The statement of the lemma follows.
\end{proof}

We remark that the statement of~\autoref{lem:ub2} is \enquote{tight} as shown by the configuration in~\autoref{fig:crossingpoints}.

\begin{lemma}\label{lem:ub3}
Let $\mathcal{S}=\{s_1,\ldots,s_m\}$ be a set of segments in convex position 
indexed in cyclic  order. Assume that
 $s_1=aa'$ and $s_m=bb'$ define a flopped csi-point $c$ such that the segment $ab$ lies on the convex hull
of $\mathcal{S}$. Then all si-points of $\mathcal{S}$ lie inside the triangle $\Delta$ spanned
by $a$, $b$ and $c$.
\end{lemma}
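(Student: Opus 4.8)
The plan is to argue directly with $P:=\operatorname{conv}(\mathcal S)$ and the supporting lines $\ell_1,\dots,\ell_m$ of $s_1,\dots,s_m$; since no two segments of $\mathcal S$ are parallel, $\ell_i\nparallel\ell_j$ for $i\neq j$, so each si-point is a single well-defined point. The first step is to describe $\Delta$ as an intersection of supporting half-planes of $P$. Let $H_1$, $H_m$ and $H$ be the closed half-planes bounded by $\ell_1$, $\ell_m$ and the line through $a,b$ that contain $P$. Since $\ell_1\nparallel\ell_m$, the set $H_1\cap H_m$ is a wedge with apex $c=\ell_1\cap\ell_m$ whose two bounding rays run along $\ell_1,\ell_m$ and contain $a$ and $b$. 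That $c$ is flopped means exactly that $c\in H$, so $H$ is the side of line $ab$ containing $c$, and the part of the wedge on that side is precisely the triangle $\Delta=abc$; hence $\Delta=H_1\cap H_m\cap H\supseteq P$.

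I would next locate the remaining supporting lines. Fix $1<i<m$. As $s_i$ is nondegenerate and has a slope different from those of $\ell_1$, $\ell_m$ and the line $ab$, it cannot lie on $\partial\Delta$; since $s_i\subseteq P\subseteq\Delta$, the line $\ell_i$ therefore enters $\operatorname{int}\Delta$. Consequently $c\notin H_i$ (the $P$-side of $\ell_i$): otherwise $\Delta=\operatorname{conv}\{a,b,c\}\subseteq H_i$ and $\ell_i$ would miss $\operatorname{int}\Delta$. So $\ell_i$ strictly separates $c$ from $a$ and from $b$, hence it crosses the two edges $[a,c]$ and $[c,b]$ of $\Delta$, say at $u_i$ and $w_i$, and $\ell_i\cap\Delta=[u_i,w_i]$.

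The crux is the monotonicity claim: as $i$ increases, $u_i$ moves monotonically along $[a,c]$ towards $c$ while $w_i$ moves monotonically along $[c,b]$ away from $c$. I would prove it by rolling the supporting line along the convex arc $\Gamma\subseteq\partial P$ that runs clockwise from $a$ to $b$ and avoids the bridge $ab$ (so that $\Gamma$ carries $s_1,\dots,s_m$ in order). In coordinates with $\ell_1$ the $y$-axis, $P\subseteq\{x\ge 0\}$, and $s_1$ pointing in the $+y$-direction along $\Gamma$, write $\Gamma(t)=(x(t),y(t))$ and let $\delta(t)$ be the direction of the supporting line at $\Gamma(t)$; it is nonincreasing and starts at $\pi/2$. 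The $y$-intercept of that line is $g(t)=y(t)-x(t)\tan\delta(t)$, and the tangency relation $y'=x'\tan\delta$ gives $g'=-x\sec^2\!\delta\cdot\delta'\ge 0$, since $x\ge 0$ and $\delta'\le 0$; as $g$ is constant along each $s_i$, the values $g$ takes on $s_2,\dots,s_m$ increase, which says that $u_i=(0,g)$ climbs along $[a,c]$ towards $c$. Running the same argument with $\ell_m$ in the role of $\ell_1$ and $\Gamma$ traversed from $b$ handles $w_i$. For all of this to be meaningful one must know $\delta$ never reaches the vertical again, i.e.\ that $\delta$ turns by less than $\pi$ along $\Gamma$ --- and this is precisely where \enquote{flopped} is used. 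The supporting-line direction turns by $2\pi$ over a full cycle, and the portion of that turn taken on the complementary route $s_m\to ab\to s_1$ is the sum of the exterior angles of $P$ at $b$ and at $a$; were this sum at most $\pi$, the extensions of $\ell_1$ past $a$ and of $\ell_m$ past $b$ would converge on the non-$P$-side of line $ab$, so $c$ would not be flopped. Hence the turn along $\Gamma$ is below $\pi$. I expect this step --- the monotonicity together with the \enquote{flopped $\Rightarrow$ turn $<\pi$} bookkeeping --- to be the main obstacle; the remaining parts are soft.

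Finally I would conclude by an interleaving argument. For $1<i<j<m$, the monotonicity puts $u_i$ before $u_j$ in the direction $a\to c$ and $w_i$ before $w_j$ in the direction $c\to b$, so on $\partial\Delta$ the four points occur in the cyclic order $u_i,u_j,w_i,w_j$; hence the chords $\ell_i\cap\Delta=[u_i,w_i]$ and $\ell_j\cap\Delta=[u_j,w_j]$ cross at an interior point of $\Delta$, and that crossing point, lying on both $\ell_i$ and $\ell_j$, equals the si-point $\ell_i\cap\ell_j$. The remaining pairs are immediate: if $i=1$ then $\ell_1\cap\ell_j=u_j\in[a,c]$ for $1<j<m$ and $\ell_1\cap\ell_m=c$, and symmetrically for $j=m$. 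Therefore every si-point of $\mathcal S$ lies in $\Delta$.
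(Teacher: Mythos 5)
Your proof is correct and takes essentially the same route as the paper: you show that the intersections of the intermediate supporting lines with $\ell_1$ and $\ell_m$ occur in monotone order along $[a,c]$ and $[c,b]$ (the paper asserts this via a radial tangent sweep, which your intercept/turning-angle computation merely makes explicit, with the flopped hypothesis guaranteeing a turn of less than $\pi$), and then conclude with the same chord-interleaving crossing argument inside $\Delta$.
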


\begin{proof}
Let $\ell_i$ be the supporting line of $s_i$. We denote the intersection of $\ell_i$ with $\ell_j$ by $x_{ij}$. Notice 
that on $\ell_1$  the following points appear in order: $a ,x_{12}, x_{13}, \ldots, x_{1m},c$.
On $\ell_m$ however, we have the order: $b, x_{m(m-1)},\ldots,x_{m2},x_{m1},c$. Both facts
can be observed by radially sweeping a tangent line around the convex hull of $\mathcal{S}$.

Now consider any two segments $s_i,s_j\in \mathcal{S}$, with $1<i<j<m$. Their si-point is denoted as
 $x_{ij}$. Since on $ac$ the order of points is $a,x_{1i},x_{1j},c$ and on $bc$ the order
of points is $b,x_{mj},x_{mi},c$ the segments $x_{1i}x_{mi}$ and $x_{1j}x_{mj}$ have to cross inside
$\Delta$ and hence the si-point defined by $s_i$ and $s_j$ lies in $\Delta$ (see also~\autoref{fig:ub3}). 
We have already observed that all other relevant si-points (defined by either $s_1$ or $s_m$) lie on the 
boundary of $\Delta$.

\begin{figure}
	\centering
	\includegraphics[scale=1]{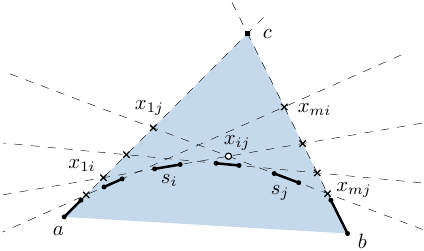}
	\caption{Illustration of the proof of~\autoref{lem:ub2}. The triangle $\Delta$ is shaded blue
	and the si-point induced by $s_i$ and $s_j$ is drawn as empty disk.}
	\label{fig:ub3}
\end{figure}
	
\end{proof}

We now prove \autoref{thm:upper} and go back to 3D.

\begin{proof}\textit{(\autoref{thm:upper}).}
Assume that we have a side-contact representation of $K_{3,250}$.
	 We call the polygons $r_1,r_2,r_3$ of the first partition class the \emph{red polygons}. 
	 The polygons of the other partition class are called the
\emph{blue polygons}. The supporting
plane of a polygon $r_i$ is named $r_i^=$.	 
	Let $\mathcal{A}$ be the arrangement given by  $r_1^=,r_2^=,r_3^=$. 
	We can assume that the three planes intersect in a single point $r_*$, and that no two sides of a polygon are parallel. Otherwise we apply suitable (small) projective transformations 
	to prevent parallel planes and lines without disconnecting the polygons. 
	 We call the eight (closed) cells of~$\mathcal{A}$ \emph{octants}. Note that every blue polygon
	has to lie in a single octant, since it has a side-contact with each of the red polygons.
	A red polygon can only be part of all octants if it contains $r_*$. 
	Thus, at least two red polygons need to avoid $r_*$ and \enquote{miss} at least
	two octants each, and only one of these
	octants can be the same. As a consequence there are at most 5 octants that have a piece
	of every red polygon on the boundary. One of them contains at least $50 = 250 /5$
	blue polygons. We denote this octant by $\mathcal{C}$.
	
	Let $f_i$ be the bounding face of $\mathcal{C}$ that contains $r_i$ and denote the interior of $f_i$
	by $\tilde f_i$. 
	Further let $\rho_{ij}$ be $f_i\cap f_j$. Both $r_1$ and $r_2$, can have at most one side fully contained in  $\rho_{12}$ and no side from $r_3$ can be completely in $\rho_{12}$ since $\rho_{12}\cap r^=_3=\{r_*\}$. Thus, $\rho:=\rho_{12}\cup\rho_{23}\cup\rho_{13}$
	contains at most six complete sides from blue polygons in $\mathcal{C}$. We ignore any blue polygon with
	a full side in $\rho$ and remain with a set $B$ of at least $44$ blue polygons. 
	
	First, we consider 
	the polygon $r_1$ and select a set~$\mathcal{S}_1$ of 44 of its sides that are incident to some polygon in $B$.
	 As usual, 
	we label the segments $
	s_1,s_2,\ldots, s_{44}$ cyclically
	and set $\mathcal{S}'_1=\{s_1,s_{12},s_{23},s_{34}\}$.
	The face $f_1$ can be obtained by intersecting
	$r_1^=$ with two closed half-spaces. 
	No segments of $\mathcal{S}'_1$ lies completely on the boundary of $f_1$ and thus, 
	by~\autoref{lem:ub2} at least
	one csi-point, say $c$, of $\mathcal{S}'_1$ lies in $\tilde f_1$.
	Take the two segments $aa'$ and $bb'$ (with $a'b'$ on the convex hull on $\mathcal{S}'_1$) 
	defining $c$ and all of the ten segments of $\mathcal{S}$ 
	in between them in the cyclic order. We call this set $\mathcal{S}''_1$. Note 
	that this set has a flopped csi-point, which is $c$. Since $a,b,c$ 
	lie in  $\tilde f_1$ we have by~\autoref{lem:ub3} that all si-points of $\mathcal{S}''_1$
	lie in $\tilde f_1$. 
	
	We now deal with polygon $r_2$. Let $\mathcal{S}_2$ be the set
	of sides of $r_2$ that share a side with a blue polygon that has a side in $\mathcal{S}''_1$.
	We get that $|\mathcal{S}_2|=12$.
	We sort the segments in $\mathcal{S}_2$ again by a radial sweep (notice that the order
	might be different than in $\mathcal{S}''_1$). This time we select the first, fourth,
	seventh and tenth segment in this order and we denote this subset by $\mathcal{S}'_2$.
	Again, we apply~\autoref{lem:ub2} to find a csi-point in $\tilde f_2$ and then~\autoref{lem:ub3}
	to obtain a set $\mathcal{S}''_2$ of (this time 4) segments, whose si-points are all in 
	 $\tilde f_2$.
	
	Finally, we consider $r_3$. Let $\mathcal{S}_3$ be the set
	of sides of $r_3$ that share a side with a blue polygon that has a side in $\mathcal{S}''_2$
	(and therefore in $\mathcal{S}''_1$ as well).
	Since $|\mathcal{S}_2|=4$ we get by~\autoref{lem:ub2} that one csi-point of $\mathcal{S}_3$
	lies in $\tilde f_3$. Two segments of $\mathcal{S}_3$ define this point. Call the adjacent blue polygons
	$b_1$ and $b_2$, with supporting planes $b_1^=$ and $b_2^=$. 
	We denote the restriction of $b_1^=/b_2^=$ to the boundary of $\mathcal{C}$ by $t_1/t_2$.
	By our construction,  
	$t_1$ and $t_2$ intersect on $\tilde f_3$ in a csi-point. 
	But both blue polygons have also a common side with each of the
	sets $\mathcal{S}''_2$ and $\mathcal{S}''_1$. As a consequence, $t_1$ and $t_2$ intersect
	in an si-point of $\mathcal{S}''_1$ inside $\tilde f_1$ and
	in an si-point of $\mathcal{S}''_2$ inside $\tilde f_2$. 
	The three si-points are distinct and define a plane. We get that $b_1^==b_2^=$.
	However, if two blue polygons lie in the same plane, all red polygons and therefore all 
	blue polygons have to lie in this plane as well. 
	Since $K_{3,250}$ is nonplanar, it has no side-contact
	representation in the plane, and we have obtained the desired contradiction.
\end{proof}

The following is now a simple consequence from the
K\H{o}vari--S\'os--Tur\'an
  theorem~\cite{KST54}, which states that an $n$-vertex graph that has
  no $K_{s,t}$ as a subgraph can have at most $O(n^{2-1/s})$
  edges.

\begin{corollary}
  \label{cor:density}
  Let $G$ be an $n$-vertex graph with a side-contact representation of convex polygons
  in 3D. Then the number of edges in $G$ is bounded by $O(n^{5/3})$.
  \end{corollary}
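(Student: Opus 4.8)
The plan is to combine \autoref{thm:upper} with the K\H{o}vari--S\'os--Tur\'an theorem in the textbook way; the only point that needs care is that side-contact representations interact well with taking subgraphs.

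First I would record that side-contact representations are hereditary: if $G$ has a side-contact representation and $W\subseteq V(G)$, then deleting the polygons assigned to the vertices in $V(G)\setminus W$ leaves a side-contact representation of the induced subgraph $G[W]$, since deleting polygons creates no new contacts and destroys no contact among the polygons of $W$.

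Next I would upgrade \autoref{thm:upper} to the statement that \emph{no graph containing $K_{3,250}$ as a subgraph has a side-contact representation}. To see this I would take such a graph $G$ (if it existed), pick the $253$ vertices $W$ of a copy of $K_{3,250}$ in $G$, pass to $G'=G[W]$ using heredity, and then revisit the proof of \autoref{thm:upper}: that proof never uses that the three ``red'' vertices are pairwise non-adjacent, that the ``blue'' vertices are pairwise non-adjacent, or that a blue vertex has no neighbours beyond the three red ones. It uses only that there are three polygons $r_1,r_2,r_3$ together with at least $250$ further polygons each sharing a side with all three, and, at the very end, that the graph at hand is nonplanar. Since all of this holds for $G'$, the same argument would again force all polygons of $G'$ to be coplanar, i.e.\ it would produce a side-contact representation of $G'\supseteq K_{3,250}$ in the plane --- impossible, as $K_{3,250}$ is nonplanar. (Equivalently, one simply reruns the proof verbatim on the $K_{3,250}$-subgraph of $G'$.)

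Finally I would invoke the K\H{o}vari--S\'os--Tur\'an theorem with $s=3$ and $t=250$: an $n$-vertex graph with a side-contact representation contains no $K_{3,250}$ as a subgraph, hence has at most $O(n^{2-1/3})=O(n^{5/3})$ edges. I expect the only nontrivial step to be the middle one --- verifying that the proof of \autoref{thm:upper} is genuinely insensitive to the presence of extra edges, so that ``$K_{3,250}$ has no side-contact representation'' strengthens to ``no supergraph of $K_{3,250}$ does''; the heredity observation and the appeal to K\H{o}vari--S\'os--Tur\'an are routine.
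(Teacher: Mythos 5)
Your proposal is correct and follows the same route as the paper, which derives the corollary directly from \autoref{thm:upper} via the K\H{o}vari--S\'os--Tur\'an theorem. The only difference is that you make explicit what the paper leaves implicit --- that deleting polygons yields representations of induced subgraphs and that the proof of \autoref{thm:upper} tolerates extra adjacencies, so no graph containing $K_{3,250}$ as a (not necessarily induced) subgraph has a representation --- which is a legitimate but routine bit of bookkeeping.
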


\bibliographystyle{plain}
\bibliography{contacts}

\begin{thebibliography}{1}

\bibitem{alam15}
Md.~Jawaherul Alam, William Evans, Stephen~G. Kobourov, Sergey Pupyrev, Jackson Toeniskoetter, and Torsten Ueckerdt.
\newblock Contact representations of graphs in {3D}.
\newblock In Frank Dehne, J\"org-R\"udiger Sack, and Ulrike Stege, editors, {\em Proc. Algorithms and Data Structures Symposium (WADS'15)}, volume 9214 of {\em LNCS}, pages 14--27, 2015.

\bibitem{akklsvw}
Elena Arseneva, Linda Kleist, Boris Klemz, Maarten L{\"{o}}ffler, Andr{\'{e}} Schulz, Birgit Vogtenhuber, and Alexander Wolff.
\newblock Adjacency graphs of polyhedral surfaces.
\newblock In Kevin Buchin and {\'{E}}ric~Colin de~Verdi{\`{e}}re, editors, {\em Proc. 37th International Symposium on Computational Geometry (SoCG '21)}, volume 189 of {\em LIPIcs}, pages 11:1--11:17. Schloss Dagstuhl - Leibniz-Zentrum f{\"{u}}r Informatik, 2021.
\newblock full version \url{https://arxiv.org/abs/2103.09803v2}.

\bibitem{erssw-rghtp-GD19}
William Evans, Pawe{\l} Rzazewski, Noushin Saeedi, Chan-Su Shin, and Alexander Wolff.
\newblock Representing graphs and hypergraphs by touching polygons in {3D}.
\newblock In Daniel Archambault and Csaba T{\'o}th, editors, {\em Proc. Graph Drawing \& Network Vis. (GD'19)}, volume 11904 of {\em LNCS}, pages 18--32. Springer, 2019.

\bibitem{FF11}
Stefan Felsner and Mathew~C. Francis.
\newblock Contact representations of planar graphs with cubes.
\newblock In Ferran Hurtado and Marc~J. van Kreveld, editors, {\em Proc. 27th Ann. Symp. Comput. Geom. (SoCG'11)}, pages 315--320. ACM, 2011.

\bibitem{KST54}
Tam{\'a}s K\H{o}vari, Vera~T. S\'os, and P\'al Tur\'an.
\newblock On a problem of {K. Zarankiewicz}.
\newblock {\em Coll. Math.}, 3(1):50--57, 1954.

\bibitem{kleist14}
Linda Kleist and Benjamin Rahman.
\newblock Unit contact representations of grid subgraphs with regular polytopes in {2D} and {3D}.
\newblock In Christian Duncan and Antonios Symvonis, editors, {\em Proc. Graph Drawing (GD'14)}, volume 8871 of {\em LNCS}, pages 137--148. Springer, 2014.

\bibitem{t86}
Carsten Thomassen.
\newblock Interval representations of planar graphs.
\newblock {\em J. Combin. Theory Ser. B}, 40(1):9--20, 1986.

\bibitem{tietze1905}
Heinrich Tietze.
\newblock {\"Uber das Problem der Nachbargebiete im Raum}.
\newblock {\em Monatshefte f{\"u}r Mathematik und Physik}, 16(1):211--216, 1905.

\end{thebibliography}

\appendix
\section{Omitted Details for the Construction in Section~\ref{sec:lb}}

We give here the full details how to obtain the one-sided
corner-contact representation of $K_{3,8}$. We start with the precise
definitions of the coordinates. Let $q_1,\ldots,q_8$ be the points of the
first red polygon $r_1$. For a point $q\in \mathbb{R}^3$ we denote its coordinates
by $(q^x,q^y,q^z)$. The coordinates for $r_1$ are chosen as  follows

\begin{center}
\begin{tabular}{>{\centering\arraybackslash}p{1cm}|*{8}{>{\centering\arraybackslash}p{1.2cm}}}
\toprule
$i$ & 1 & 2 & 3 & 4 & 5 & 6 & 7 & 8 \\
\midrule
$q_i^x$& $8$& $8$& $0.08$& $0$& $-8$& $-8$& $-0.08$ & $0$  \\
$q_i^y$& $2\sqrt{3}$ & $2\sqrt{3}$ &$2\sqrt{3}$ &$2\sqrt{3}$ &$2\sqrt{3}$ &$2\sqrt{3}$ &$2\sqrt{3}$ &$2\sqrt{3}$    \\
$q_i^z$& $-0.6$ & $-0.4$ & $1.1815$ & $1.1814$ & $0.6$ & $0.4$ & $-1.1815$ & $-1.1814$ \\
\bottomrule 	
\end{tabular}
\end{center}

\begin{figure}
	\centering
	\includegraphics[scale=1]{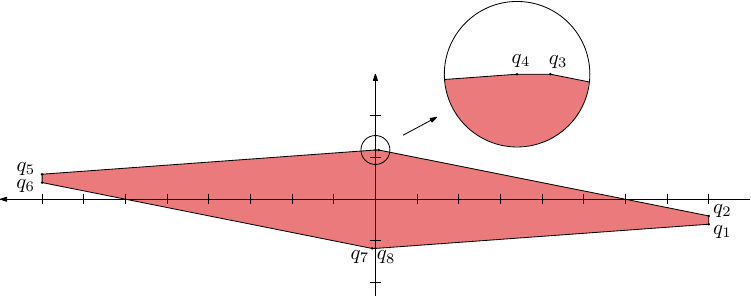}
	\caption{The shape of a red polygon.}
	\label{fig:red2d}
\end{figure}

The red polygon $r_2$ is obtained by rotating a copy $r_1$ around the z-axis by $2\pi/3$. Similarly, $r_3$ is
given by a copy of $r_1$ rotated by $-2\pi/3$ around the z-axis. We denote the vertices of $r_2$ by $s_1,\ldots,s_8$
and the vertices of $r_3$ by $t_1,\ldots,t_8$, such that $s_i/t_i$ are copies of $q_i$. We need to
assure that the three polygons are disjoint. When looking at the top view (\autoref{fig:top}) 
we see that there can be at most three possible intersections, marked with a cross in the figure. Due
to symmetry we only need to check one of these locations. Thus, it suffices to assure that segment $q_2q_3$ lies
below segment $t_6t_7$ at the line $v$ parallel to the z-axis through the point $x$, where $x=(6,2\sqrt{3},0)$ is the intersection
of $r_1$ and $r_3$ when projected into the xy-plane. Computing the distances shows that the 
z-coordinate of $q_2q_3\cap v$ is $-0.000631313$ and therefore negative. By symmetry $t_6t_7\cap v$
has a positive z-coordinate and hence the two red polygons are disjoint.

\begin{figure}
	\centering
	\includegraphics[scale=1]{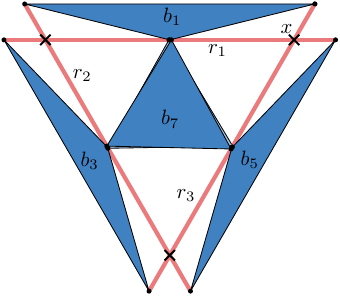}
	\caption{Top view of the representation.}
	\label{fig:top}
\end{figure}

To specify a blue polygon $b$,
we write $b=(q_i,s_j,t_k)$ if the corners of $b$ are $q_i,s_j,t_k$. The blue polygons are then given as follows:

\vskip1ex
\begin{centering}
\begin{tabular}{*{4}{>{\centering\arraybackslash}p{3cm}}}
$b_1=(q_4,s_5,t_2)$ & 	$b_2=(q_8,s_6,t_1)$ & $b_3=(q_2,s_4,t_5)$ & $b_4=(q_1,s_8,t_6)$ \\
$b_5=(q_5,s_2,t_4)$ & 	$b_6=(q_6,s_1,t_8)$ & $b_7=(q_3,s_3,t_3)$ & $b_7=(q_7,s_7,t_7)$ \\
\end{tabular}	
\end{centering}
\vskip1ex

It remains to check if the configuration of red and blue polygons is one-sided. To decide for a polygon $p$ on which side 
of the supporting plane $p^=$ a point $q$ lies we can use the signed volume of the tetrahedron spanned by
$q,q_1,q_2,q_3$, where $q_1,q_2,q_3$ are three arbitrary corners of $p$. This volume can be computed with the following expression:
$$
[q,q_1,q_2,q_3]:=\det \begin{pmatrix}
	q^x & q_1^x & q_2^x & q_3^x \\
	q^y & q_1^y & q_2^y & q_3^y \\
	q^z & q_1^z & q_2^z & q_3^z \\
	1 & 1 & 1 & 1 \\ 
\end{pmatrix}
$$
Thus, for every polygon $p'$ adjacent to  $p$ we have to check if the nonzero entries of $\{ [q',q_1,q_2,q_3] \mid q' \text{ is corner of $p'$}\}$ have all the same sign. We finish with a small python-3 script that will carry out the necessary computations. Running the script verifies
that the configuration is a one-sided corner-contact representation of $K_{3,8}$.

\begin{minted}{python}

import numpy as np
import math

def check_side(pa,polygonlist):
    """  Processes all polygons in polygonlist
    Prints "true" if all polygons in polygonlist are on
    one side of polygon pa. 
    """
    for pb in polygonlist:
        orientation = []
        for point in pb:
            m1 = np.array([pa[0],pa[1],pa[2],point])
            m2 = np.insert(m1,3,[1,1,1,1],1)
            orientation.append(np.linalg.det(m2))
        a = np.array(orientation)
        print(np.any((a <= 0)|(a >= 0 )))

 
red2d = np.array([[8,-.6],[8,-.4],[.08,1.1815],[0,1.1814],[-8,.6],
	[-8,.4],[-.08,-1.1815],[0,-1.1814]],)
red1 = np.insert(red2d,1,np.full((1,8),2*math.sqrt(3)),1)
rot_matrix = np.array([[-1/2,-math.sqrt(3)/2,0],
	[math.sqrt(3)/2,-1/2,0],[0,0,1]])
red2 = np.matmul(red1,rot_matrix)
red3 = np.matmul(red2,rot_matrix)

blue1 = np.array([red1[3],red2[4],red3[1]])
blue2 = np.array([red1[7],red2[5],red3[0]])
blue3 = np.array([red2[3],red3[4],red1[1]])
blue4 = np.array([red2[7],red3[5],red1[0]])
blue5 = np.array([red3[3],red1[4],red2[1]])
blue6 = np.array([red3[7],red1[5],red2[0]])
blue7 = np.array([red1[2],red2[2],red3[2]])
blue8 = np.array([red1[6],red2[6],red3[6]])

reds = [red1,red2,red3]
blues = [blue1,blue2,blue3,blue4,blue5,blue6,blue7,blue8]

for red in reds:
    check_side(red,blues)

for blue in blues:
    check_side(blue,reds)
	
\end{minted}
	
\end{document}